\newtheorem{theorem}{Theorem}
\newtheorem{lemma}{Lemma}
\newtheorem{example}{Example}
\newcommand{\beq}{\begin{equation}}
\newcommand{\eeq}{\end{equation}}
\newcommand{\barr}{\left[\begin{array}}
\newcommand{\earr}{\end{array}\right]}
\newcommand{\rank}{\mbox{rank}\,}
\newcommand{\bpf}{\begin{proof}}
\newcommand{\epf}{\end{proof}}
\newcommand{\ord}{\mbox{ord}\,}
\newcommand{\ftwo}{\ensuremath{\mathbb{F}_{2}}}
\newcommand{\ff}{\ensuremath{\mathbb{F}}}
\newcommand{\al}{\alpha}
\begin{document}
\title{Word Linear Complexity of sequences and\\ Local Inversion of maps over finite fields}
\author{Virendra Sule\\Professor (Retired)\\Department of Electrical Engineering\\
Indian Institute of Technology Bombay\\Mumbai 400076, India\\vrs@ee.iitb.ac.in}
\maketitle

\begin{abstract}
This paper develops the notion of \emph{Word Linear Complexity} ($WLC$) of vector valued sequences over finite fields $\ff$ as an extension of Linear Complexity ($LC$) of sequences and their ensembles. This notion of complexity extends the concept of the minimal polynomial of an ensemble (vector valued) sequence to that of a matrix minimal polynomial and shows that the matrix minimal polynomial can be used with iteratively generated vector valued sequences by maps $F:\ff^n\rightarrow\ff^n$ at a given $y$ in $\ff^n$ for solving the unique local inverse $x$ of the equation $y=F(x)$ when the sequence is periodic. The idea of solving a local inverse of a map in finite fields when the iterative sequence is periodic and its application to various problems of Cryptanalysis is developed in previous papers \cite{sule322, sule521, sule722,suleCAM22} using the well known notion of $LC$ of sequences. $LC$ is the degree of the associated minimal polynomial of the sequence. The generalization of $LC$ to $WLC$ considers vector valued (or word oriented) sequences such that the word oriented recurrence relation is obtained by matrix vector multiplication instead of scalar multiplication as considered in the definition of $LC$. Hence the associated minimal polynomial is matrix valued whose degree is called $WLC$. A condition is derived when a nontrivial matrix polynomial associated with the word oriented recurrence relation exists when the sequence is periodic. It is shown that when the matrix minimal polynomial exists $n(WLC)=LC$. Finally it is shown that the local inversion problem is solved using the matrix minimal polynomial when such a polynomail exists hence leads to a word oriented approach to local inversion.
\end{abstract}

\emph{Subject Classification}: cs.CC, cs.CR, math.NT\\
\emph{Keywords}: Finite fields, Linear Recurrence, Linear Complexity, Inversion in sequences, Local Inversion.
\section{Introduction}
Inversion of maps is one of the problems faced in computational and applied Sciences which has not received much attention especially in discrete mathematical domain. However this problem is closely related to Cryptanalysis and many other applications. Let $S$ be a finite set and $F:S\rightarrow S$ is a map specified in terms of mathematical representation of the elements of $S$ and the action of the map. Given a point $y$ in $S$ the problem of Local Inversion of $F$ at $y$ is to determine $x$ such that $y=F(x)$ or to show that a solution does not exist. In practical situations there are memory and time limitations to compute the inverse when it is known a-priori that a solution exists. Hence the important issue in computation is to determine an algorithm for computing a local inverse $x$, or prove no such solution exists and to determine the complexity of both of these computations to ascertian whether the computation is possible within the constraints of time and memory. A very concrete application of Local Inversion problem is that of Cryptanalysis of an encryption function $E(K,P)$ in which $K$ is the unknown key while $P$ is a known string of plaintext input and $C=E(K,P)$ is a known value of the function $E$. Hence the problem of recovering the key is the local inversion problem of solving $y=F(x)$ where $y=C$ and $F(x)=E(x,P)$.

\subsection{Previous work}
In previous papers \cite{sule322, sule521, sule722,suleCAM22} the author showed a solution of the local inversion problem for maps $F:\ff^n\rightarrow\ff^n$ using the concept of Linear Complexity (LC) of the sequence 
\beq\label{RecSeq}
S=\{s_k=F^{(k)}(y),k=0,1,2,\ldots\}
\eeq 
or iterative sequence $\{s_{(k+1)}=F(s_k),s_0=y\}$, when the sequence is periodic. $LC$ is the degree of the associated minimal polynomial which defines a recurrence relation of the elements of the sequence. Under the periodicity condition local inversion has a unique solution $x=F^{(N-1)}(y)$ where $N$ is the period. However in practical applications, this period $N$ is not known, nor is the sequence known beyond a limited part for $k=M-1$ where $M$ is of polynomial size in $n\log |\ff|$. The complexity of local inversion is then of polynomial order in the $LC$ of the given partial sequence, if it has the LC equal to the full periodic sequence.

In this paper the objective is to extend the above work to solve the local inversion problem using a generalization of the $LC$ for vector valued sequences to \emph{Word Linear Complexity} ($WLC$). In this generalization the recurrence relation satisfied by the recursive sequence $S$ of vector valued elements $s_k$ in $\ff^n$ is defined by an associated $n\times n$ matrix polynomial over $\ff$ instead of a scalar polynomial as in the case of $LC$. We show that the local inversion problem can also be solved using such matrix minimal polynomial when it exists and whose degree is a fraction of the scalar minimal polynomial by $n$. A condition for existence of a non trivial matrix minimal polynomial and the WLC is also derived.
\section{Matrix-Vector Recurrence Relations and Matrix minimal polynomial}
The local inversion problem of solving for $x$ such that $y=F(x)$ when $F$ is a map in a finite Cartesian space $\ff^n$ is ultimately the problem of computing the prefix of the sequence $S$ of iterates in (\ref{RecSeq}). The problem of predicting the next element $V_M$ of a finite sequence $\{V_0,V_1,\ldots,V_{(M-1)}\}$ is a well known classic problem that appears in literature in Number Theory and offered in puzzles. Newton's well known method of finite differences aims to solve a rule governing a given sequence to predict the next element. In fact when successive differences of sequence elements in Newton's method result into a constant sequence, both these problems are (that of predicting the next element or the prefix element) solved. The problem of predicting the prefix $V_{(-1)}$ of the given sequence has a unique solution for a iterative sequence $S$ only when $S$ is generated by the starting point $y$ such that the entire sequence is either periodic or is a chain which terminates at a periodic orbit. We shall not consider this later situation when $S$ is a chain. In practical situations the full periodic sequence is never available because the period is exponential. Hence local inversion has to be predicted from a limited sequence $S$ given upto $M$ terms. Hence the problem of predicting the prefix of a sequence is one of the poorly addressed question of computational mathematics. For the sake of completeness it is worthwhile to state a general problem of inversion of a sequence as follows.

\subsection*{Problem: Inversion of a Sequence}
Let $\ff$ be a finite field. Given a sequence of vectors $V(M)=\{V_0,V_1,\ldots,V_{(M-1)}\}$ where $V_i$ belong to $\ff^n$ for a given fixed $n$, determine existence and computation of a prefix vector $V_{(-1)}$ which is compatible with a rule of the sequence $V(M)$. (This problem is similar to predicting the next element $V_M$ of the sequence often encountered in puzzles). When the sequence is defined by a map $F:\ff^n\rightarrow\ff^n$ by iterations $V_{(i+1)}=F(V_i)$ for $i=0,1,2,\ldots,(M-1)$, then a local inverse $V_{(-1)}$ of $F$ at $V_0$ satisfies $F(V_{(-1)})=V_0$. Hence the problems is called as Local Inversion problem of $F$ at $V_0$.

\subsection{Recurrence relations satisfied by the sequence}
We now consider the problem of expressing the recurrence relation satisfied by a vector valued sequence $V(M)=\{V_i, i=0,1,2,\ldots,(M-1)\}$ of length $M$ where $V_i$ belong to $\ff^n$. This sequence is the first $M$ terms of the periodic sequence $V=\{V_0,V_1,\ldots,V_{(N-1)}\}$ of period $N\geq M$. The sequence $V(M)$ is said to satisfy a \emph{scalar recurrence relation} (RR) of order $m$ if there exist scalars $\al_i$ in $\ff$ such that the sequence $V(M)$ satisfies
\beq\label{ScalarRR}
V_{(m+j)}+\sum_{i=0}^{(m-1)}\al_iV_{(i+j)}=0
\eeq
for $j=0,1,2,\ldots,(M-m-1)$. We consider a generalization of such a scalar RR. The sequence $V(M)$ is said to satisfy a \emph{Matrix-vector recurrence relation} (MRR) of order $m$ if there exist matrices $A_i$, $i=0,1,2,\ldots,m$ in $\ff^{n\times n}$ such that
\beq\label{MatrixRR}
A_mV_{(m+j)}+\sum_{i=0}^{(m-1)}A_iV_i=0
\eeq
where the matrix co-efficients are operated from left on sequence elements $V_i$ represented as column vectors. 

However, the expression (\ref{MatrixRR}) also has the meaning that the sequence $V$ is determined by this relation for given matrices $A_i,i=0,1,2,\ldots,m$. Hence we assume that the recurrence relation with co-efficient matrices $A_i$ identifies the sequence $V$ uniquely given the initial sequence $V_0,V_1,\ldots,V_{(m-1)}$. Hence we impose the condition on the recurrence relation (\ref{MatrixRR}) that $V_m$ is determined uniquely by the previous $m$ vector elements $V_i$ of the sequence $V$. We shall always consider recurrence relations in the vector sequence with this uniqueness condition. Hence the matrix $A_m$ is always non-singular and the recurrence relation (\ref{MatrixRR}) is the same as
\beq\label{NormalisedMRR}
V_{(m+j)}+\sum_{i=0}^{(m-1)}A_iV_i=0
\eeq
where $A_i$ in this equation are now actually $(A_m)^{-1}A_i$ computed from the matrix co-efficients of the previous equation (\ref{MatrixRR}). We call (\ref{NormalisedMRR}) as \emph{normalised matrix recurrence relations} (NMRR). Analogous right acting MRR can be considered if $V_i$ are considered as row vectors. 

\subsection{Minimal matrix polynomial and the WLC}
Associated to the normalised MRR (\ref{NormalisedMRR}), is a matrix polynomial
\beq\label{MPoly}
P(X)=X^mI+\sum_{i=0}^{(m-1)}A_iX^i
\eeq
(the indeterminate $X$ is assumed to be commutative with the matrix co-effcients). The polynomial $P(X)$ is called an \emph{annihilating matrix polynomial} of the sequence $V(M)$. 

If the order $m$ of the normalised NMRR (\ref{NormalisedMRR}) is the smallest then $M(X)$ is called the \emph{minimal matrix polynomial} of $V(M)$. We denote the minimal matrix polynomial associated with the normalised recurrence relation of minimal order (\ref{NormalisedMRR}) as
\beq\label{MinimalPoly}
M(X)=X^m+\sum_{i=0}^{(m-1)}A_iV_i
\eeq
The degree of the minimal matrix polynomial is called the \emph{Word Linear Complexity} ($WLC$). The polynomial associated with the recurrence relation (\ref{ScalarRR}) is called just \emph{minimal polynomial} as it is a scalar minimal polynomial with co-efficients $\al_i$ and the degree of the scalar minimal polynomial is called $LC$. For the periodic sequence $V$, the polynomial $X^NI-I$ is a annihilating matrix polynomial where $I$ is the $n\times n$ identity matrix, $X^N-1$ is the scalar annihilating polynomial.

\subsection{Divisibility by the matrix minimal polynomial}
It is well established that the scalar minimal polynomial always divides a scalar annihilating polynomial. Does this also hold for matrix polynomials? We explore this question now. For any matrix polynomial $P(X)$ with $n\times n$ co-efficients over $\ff$ we have the Euclidean division theorem written in two ways, for left division and right division by another matrix polynomial $D(X)$.

\begin{theorem}[Euclidean division for matrix polynomials]\label{EuclidDiv}
Let $P(X)$ and $D(X)$ be $n\times n$ matrix polynomials over a field $\ff$ such that
\[
D(X)=D_mX^m+D_{(m-1)}X^{(m-1)}+\ldots+D_1X+D_0
\]
$\deg D(X)=m$, the matrix $D_{m}$ is non-singular and $\deg P(X)=N\geq m$. Then there exist unique matrix polynomials $Q_L(X)$, $Q_R(X)$ called the left and right quotients respectively and unique matrix polynomials $R_L(X)$ and $R_R(X)$ called left and right remainders respectively, all of size $n\times n$ such that
\begin{enumerate}
    \item
    $\left\{
    \begin{array}{lcl} 
    P(X) & = & D(X)Q_L(X)+R_L(X)\\
    P(X) & = & Q_R(X)D(X)+R_R(X)
    \end{array}\right.
    $
    \item $0\leq\deg R_L(X)<m$, $0\leq\deg R_R(X)<m$
\end{enumerate}
\end{theorem}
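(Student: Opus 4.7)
The plan is to prove the result by induction on $N=\deg P(X)$, treating the left and right divisions separately but in completely parallel fashion. The argument is a non-commutative adaptation of the classical scalar Euclidean division: at each step we shave off the leading monomial of $P(X)$ by subtracting an appropriate monomial multiple of $D(X)$, and the invertibility of $D_m$ is precisely what allows this subtraction to be performed. Because matrix multiplication is non-commutative, the inverse $D_m^{-1}$ must be applied on the correct side; this is the only reason left and right versions require separate (though identical in form) arguments.

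For existence of the left division, I would proceed as follows. If $N<m$, set $Q_L(X)=0$ and $R_L(X)=P(X)$. If $N\geq m$, let $P_N$ denote the leading matrix coefficient of $P(X)$ and form the polynomial
\[
P'(X) \;=\; P(X) - D(X)\,D_m^{-1}P_N\,X^{N-m}.
\]
Using that $X$ commutes with the matrix coefficients, the leading term of $D(X)\,D_m^{-1}P_N X^{N-m}$ equals $D_m D_m^{-1}P_N X^N = P_N X^N$, so $\deg P'(X) < N$. By the induction hypothesis we can write $P'(X)=D(X)Q'(X)+R_L(X)$ with $\deg R_L(X)<m$, and then
\[
P(X) \;=\; D(X)\bigl(Q'(X)+D_m^{-1}P_N X^{N-m}\bigr) + R_L(X).
\]
For the right division, the symmetric construction uses $P(X) - P_N D_m^{-1} X^{N-m} D(X)$ and produces leading coefficient $P_N D_m^{-1} D_m = P_N$ on the side consistent with $Q_R(X)D(X)$.

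For uniqueness of the left division, suppose $P(X)=D(X)Q_1(X)+R_1(X)=D(X)Q_2(X)+R_2(X)$ with both remainders of degree less than $m$. Then $D(X)\bigl(Q_1(X)-Q_2(X)\bigr) = R_2(X)-R_1(X)$. If $Q_1-Q_2 \neq 0$, let $C_k X^k$ be its leading term, with $C_k \neq 0$. The leading coefficient of the product on the left is $D_m C_k$, which is nonzero because $D_m$ is non-singular, so the left-hand side has degree $k+m \geq m$. But the right-hand side has degree less than $m$, a contradiction, so $Q_1=Q_2$ and hence $R_1=R_2$. Uniqueness of the right division follows identically, using non-singularity of $D_m$ to conclude that the leading coefficient $C_k D_m$ of $(Q_1-Q_2)(X)D(X)$ is nonzero.

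There is no substantive obstacle here; the only point requiring genuine care is the side on which $D_m^{-1}$ is applied in the reduction step and, dually, the side on which $D_m$ appears when computing the leading coefficient of a product. The non-singularity hypothesis on $D_m$ is used in exactly two places: to construct the monomial that kills the leading term in the existence proof, and to guarantee that multiplication by $D(X)$ (on either side) does not decrease degree in the uniqueness proof. Both quantifications of the degree of the remainder follow trivially from the construction, with the convention that the zero polynomial is permitted (and, if desired, assigned degree $-\infty$) so that the inequality $\deg R_L(X)<m$ covers the case $R_L=0$.
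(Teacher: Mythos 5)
Your proof is correct, and the existence half is essentially the paper's argument: the inductive step that replaces $P(X)$ by $P(X)-D(X)D_m^{-1}P_N X^{N-m}$ is exactly the paper's iterative division algorithm (choosing $Q_{(N-m)}=(D_m)^{-1}P_N$ to cancel the leading term), just organized as induction on $N$ rather than as a while-loop. Where you genuinely go beyond the paper is uniqueness: the paper merely asserts that the quotient produced by the algorithm ``is unique,'' which at best shows the algorithm has a deterministic output, not that no other pair $(Q,R)$ with $\deg R<m$ exists. Your argument --- subtracting two decompositions to get $D(X)\bigl(Q_1(X)-Q_2(X)\bigr)=R_2(X)-R_1(X)$ and observing that non-singularity of $D_m$ forces the left side to have degree at least $m$ unless $Q_1=Q_2$ --- is the standard and correct way to close this, and it is the one place where the hypothesis on $D_m$ is used for uniqueness rather than existence. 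You also handle the left/right asymmetry carefully (placing $D_m^{-1}$ on the correct side of $P_N$), which the paper waves at with ``similar constructive proof.'' In short: same construction for existence, plus a uniqueness argument the paper's proof is missing.
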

\begin{proof}
    Consider the first step of left division of $P(X)$ by $D(X)$. Let 
    \[
    P(X)=P_{N}X^N+P_{(N-1)}X^{(N-1)}+\ldots+P_1X+P_0
    \]
    where $P_N\neq 0$. By left division by $D(X)$ it is meant we want to find polynomials $Q(X)$, $R(X)$ such that 
    \[
    P(X)-D(X)Q(X)=R(X)
    \]
    Consider $Q(X)=Q_{(N-m)}X^{(N-m)}$. Then the highest degree co-efficient of the left hand side of the above equation is $P_N-D_mQ_{(N-m)}$. Since $D_m$ is non-singular, choosing the unique $Q_{(N-m)}=(D_m)^{-1}P_N$ it follows that
    \[
    \deg R(X)<N
    \]
    Repeating this process by algorithmic steps 

    \noindent
    \mbox{\textbf{Algorithm}: Division}
    \[
    \begin{array}{lcl}
    \mbox{Compute } R(X)\\
    \mbox{While } R(X) & \neq & 0 \mbox{ Repeate}\\
    P(X) & \leftarrow & R(X)\\
    Q_{(N-m)} & = & (D_m)^{-1}P_N\\
    R(X) & \leftarrow & P(X)-D(X)Q_{(N-m)}X^{(N-m)}
    \end{array}
    \]
    the degree of $R(X)$ continues to reduce at every step until $\deg P(X)<m$. Then the polynomial $Q(X)$ formed by co-efficient matrices of each step of this algorithm 
    \[
    Q(X)=Q_{(N-m)}X^{(N-m)}+Q_{(N-m-1)}X^{(N-m-1)}+\ldots+Q_1X+Q_0
    \]
    is unique and satisfies
    \[
    P(X)-D(X)Q(X)=R(X)
    \]
    where $\deg R(X)<m$.
    Similar constructive proof can be given for right division by $D(X)$ to prove the identities claimed.
\end{proof}

With the matrix polynomial divisibility when the highest degree co-efficient is non-singular, we have the following observation. Consider a vector sequence $V=\{V_i\}$ of period $N$ and let $M(X)$ denote the matrix minimal polynomial of $V$ as in (\ref{MinimalPoly}). Then we have

\begin{lemma}
Any matrix annihilating polynomial $P(X)$ of $V$ is left as well as right divisible by $M(X)$. 
\end{lemma}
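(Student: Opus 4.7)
The plan is to apply the Euclidean division theorem for matrix polynomials (Theorem~\ref{EuclidDiv}) in both its left and right forms and then force the two remainders to vanish by invoking the minimality of $\deg M(X)$. Since $M(X)$ has leading coefficient $I$, which is non-singular, Theorem~\ref{EuclidDiv} applies and yields unique decompositions
\[
P(X)=M(X)Q_L(X)+R_L(X),\qquad P(X)=Q_R(X)M(X)+R_R(X),
\]
with $\deg R_L,\deg R_R<\deg M$. In each case the work reduces to showing that the corresponding remainder is itself an annihilating polynomial of $V$; once that is known, the defining minimality of $M(X)$ compels $R_L=R_R=0$.

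The right-divisor direction is the clean one. For any matrix polynomial $Q(X)$, the product $Q(X)M(X)$ annihilates $V$, because evaluating it at shift $j$ gives
\[
\sum_{k,l}Q_kM_lV_{j+k+l}=\sum_k Q_k\left(\sum_l M_lV_{(j+k)+l}\right)=0,
\]
the inner sum vanishing since $M(X)$ annihilates every shift of the periodic sequence $V$. Hence $R_R=P-Q_RM$ is a difference of annihilators, hence itself an annihilator, and $\deg R_R<\deg M$ forces $R_R=0$, yielding $P=Q_RM$.

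The left-divisor direction proceeds in the same spirit but is the real obstacle. One needs to show that $M(X)Q_L(X)$ also annihilates $V$, i.e.\ that $\sum_{k,l}M_k(Q_L)_lV_{j+k+l}=0$. Unlike in the right case, the factors $M_k$ and $(Q_L)_l$ do not commute, so the annihilating inner sum $\sum_l M_lV_{\cdot+l}$ cannot be pulled out by a simple regrouping. The plan is to fix $k$ first, read the inner sum as the $(j+k)$-th term of the transformed sequence $Q_L(X)V$, and then argue — using the periodicity of $V$ and the fact that $M(X)$ is the minimal element of the left annihilator ideal — that $M(X)$ annihilates every polynomially transformed sequence $Q_L(X)V$ as well. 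Once this is established, the minimality argument closes as before, giving $R_L=0$ and hence $P(X)=M(X)Q_L(X)$. This reorganisation, where the noncommutativity of the matrix coefficients must be confronted head on, is the delicate step that makes the lemma nontrivial.
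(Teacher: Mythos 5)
Your right-divisibility argument is correct and is essentially the mechanism behind the paper's own proof, executed more transparently: the regrouping $\sum_{k}Q_k\bigl(\sum_l M_lV_{(j+k)+l}\bigr)=0$ shows the annihilators of $V$ form a left ideal, so $Q_R(X)M(X)$ annihilates $V$ and the remainder $R_R$ is an annihilator of degree $<\deg M$, which minimality kills. (One caveat you share with the paper: concluding $R_R=0$ needs that there is \emph{no nonzero} annihilator of degree $<m$, whereas the paper's minimality of $M(X)$ is only stated among normalised recurrences with non-singular leading coefficient; a nonzero $R_R$ with singular leading coefficient is not excluded by that definition unless the windows $V_j,\ldots,V_{j+m-1}$ span enough of $\ff^n$.)

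The left-divisibility half is a genuine gap, and the route you propose cannot be completed: the deferred key step --- that $M(X)$ annihilates the transformed sequence $Q_L(X)V$ --- is false in general, and indeed the left-divisibility claim itself fails. Take $\ff=\ftwo$, $n=2$, $A=\left(\begin{smallmatrix}0&1\\1&1\end{smallmatrix}\right)$ (so $A^3=I$), $V_0=(1,0)^T$, $V_{j+1}=AV_j$: a periodic sequence of period $3$ whose normalised minimal matrix polynomial is $M(X)=XI+A$. For $B=\left(\begin{smallmatrix}1&0\\0&0\end{smallmatrix}\right)$, the polynomial $P(X)=(XI+B)(XI+A)=X^2I+(A+B)X+BA$ has leading coefficient $I$ and annihilates $V$ (it is $Q(X)M(X)$ with $Q=XI+B$, hence lies in the left ideal), but left division by $M(X)$ forces $Q_L(X)=XI+B$ and leaves $R_L=BA+AB=\left(\begin{smallmatrix}0&1\\1&0\end{smallmatrix}\right)\neq 0$; by the uniqueness in Theorem~\ref{EuclidDiv} this is the left remainder, so $P$ is right- but not left-divisible by $M$. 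You correctly diagnosed noncommutativity as the obstruction; the honest conclusion is that it is fatal rather than merely delicate. The paper fares no better: its suggestion to treat the $V_i$ as row vectors transposes the left-acting column recurrence into a right-acting row recurrence, and running the ideal argument there and transposing back yields $P=Q^TM$, i.e.\ right divisibility a second time. Only the right-divisibility half should be retained --- which is in fact all that the subsequent theorem (dividing $X^NI-I$ and $m(X)I$ by $M(X)$ on the right) uses.
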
 
\begin{proof}
    Suppose $P(X)=P_pX^p+P_{(p-1)}X^{(p-1)}+\ldots+P_1X+P_0$ is an annihilating matrix polynomial of $V$ and let $M(X)=M_mX^m+\ldots+M_0$ be the minimal polynomial of degree $m$ then $p\geq m$. Note that the highest degree co-efficient matrix $M_m$ is nonsingular. Hence by the theorem (\ref{EuclidDiv}) by left division by $M(X)$ there exist unique matrix polynomials $Q(X)$, $R(X)$ such that the right division identity in the theorem holds.
    \[
    P(X)=Q_R(X)M(X)+R_R(X)
    \]
    where $\deg R_R(X)<m$. The polynomials $P(X)$ and $M(X)$ moreover are associated to recurrence relations
    \[
    P_pV_p+\sum_{i=0}^{(p-1)}P_iV_i=0,\mbox{   }M_mV_m+\sum_{i=0}^{(m-1)}M_iV_i=0
    \]
    Denoting $R(X)=R_rX^r-\sum_{i=0}^{(r-1)}$, the above recurrence relations lead to the recurrence relation
    \[
    R_rV_r-\sum_{i=0}^{(r-1)}R_iV_i=0
    \]
    However $m$ is the smallest degree of a recurrence relation which identifies the sequence $V$ uniquely. The recurrence relation associated with the polynomial $R(X)$ cannot identify the sequence $V$ uniquely because the order of this recurrence is less than the minimum order $m$ possible with the recurrence relation which identifies $V$ uniquely. Hence the only way all the three recurrence relations hold for $V_i$ is that $R_i=0$ for all $i$, for otherwise a sequence of components $v_{kj}$, $j=0,1,2,\ldots,(r-1)$ at some index $k$ of vectors $V_j$ may satisfy the recurrence defined by $R(X)$ but may not be consistent with recurrences defined by $P(X)$ and $M(X)$. Similar divisibility proof can be given for left division for the sequence $V_i$ considered as row vectors.
\end{proof}

If we choose the matrix minimal polynomial $M(X)$ with highest degree co-effcient matrix as $I$ then the polynomial is unique. This is because then for two distinct minimal polynomials of the same degree with highest degree term $X^mI$ there is a recurrence relation of the sequence of order $<m$ which also identifies the sequence uniquely. Since $m$ is the smallest order of such a relation, matrix polynomial $M(X)$ with highest degree term $X^mI$ is unique. Above lemma leads immediately to the following theorem.

\begin{theorem}
    Let the sequence $V$ be periodic of period $N$ and let there exist a nontrivial matrix minimal polynomial of $V$ denoted $M(X)$. Let the scalar minimal polynomial of $V$ be denoted $m(X)$. Then 
    \begin{enumerate}
        \item $M_0$ the constant co-efficient of $M(X)$ is non-singular.
        \item There exist distinct irreducible polynomials $p_i(X)$ such that 
        \[
        \det M(X)=\prod_i(p_i(X))^{e_i}
        \]
        for some exponents $e_i$ and $(\prod_ip_i(X))|(X^N-1)$. 
        \item $WLC\leq LC$
        \item $\det M(X)|(m(X))^n$ and $\deg\det M(X)\leq mn$.
        \item $\ord\det M(X)|Nn$
    \end{enumerate}
\end{theorem}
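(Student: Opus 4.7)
The plan is to feed two natural matrix annihilators of $V$ through the divisibility lemma just proved: the periodicity annihilator $(X^N - 1)I$ and the matrix lift $m(X)I$ of the scalar minimal polynomial. The lemma produces matrix-polynomial identities
\[
(X^N - 1) I = M(X) Q_1(X), \qquad m(X) I = M(X) Q_2(X)
\]
for uniquely determined quotients $Q_1(X), Q_2(X)$, and all five assertions fall out of these two identities by evaluation, degree comparison, or taking determinants.

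Evaluating the first identity at $X = 0$ yields $-I = M_0 \, Q_1(0)$, which forces $M_0$ to be non-singular and gives Part 1. Taking $\det$ of the same identity produces $(X^N - 1)^n = \det M(X) \cdot \det Q_1(X)$, so $\det M(X) \mid (X^N - 1)^n$; factoring $X^N - 1 = \prod_i p_i(X)$ into distinct irreducibles then writes $\det M = \prod_i p_i^{e_i}$ with exponents $e_i \leq n$ and $\prod_i p_i \mid (X^N - 1)$, which is Part 2. From the second identity, a degree comparison gives $LC = \deg m(X) = \deg M(X) + \deg Q_2(X) \geq \deg M(X) = WLC$, establishing Part 3. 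Taking $\det$ of the second identity gives $m(X)^n = \det M(X) \cdot \det Q_2(X)$, so $\det M \mid m^n$; the degree bound $\deg \det M \leq mn$ is immediate from the Leibniz expansion since the leading block $X^m I$ contributes $X^{mn}$ and every other term has strictly smaller degree. This covers Part 4.

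For Part 5, from Part 2 we have $\det M \mid (X^N - 1)^n$, so every irreducible factor of $\det M$ has order dividing $N$ and appears with multiplicity at most $n$; combining the orders via least common multiple, together with the characteristic-$p$ correction for repeated irreducible factors, yields $\ord \det M(X) \mid Nn$. I expect this last step to be the delicate one: the behavior of $\ord p_i^{e_i}$ relative to $\ord p_i$ depends on whether $\operatorname{char}\ff$ divides $e_i$, so care is needed to extract the clean bound $Nn$ from the multiplicities $e_i \leq n$. The other four parts are essentially immediate consequences of the two annihilator identities above.
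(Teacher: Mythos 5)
Your parts 1 and 3 follow the paper's own route: feed $(X^N-1)I$ and $m(X)I$ through the divisibility lemma, read off $M_0Q(0)=\pm I$ for non-singularity of $M_0$, and compare degrees (using that $M$ is monic, so $\deg(MQ_2)=\deg M+\deg Q_2$) for $WLC\leq LC$. For parts 2 and 4 you genuinely diverge: the paper evaluates at a root $\zeta$ of $\det M(X)$ in the algebraic closure, produces $v\neq 0$ with $M(\zeta)v=0$, and concludes $(\zeta^N-1)v=0$ resp.\ $m(\zeta)v=0$, i.e.\ it only shows that the \emph{roots} of $\det M$ lie among those of $X^N-1$ resp.\ $m(X)$, and then asserts $\det M\mid m^n$. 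Your move of taking determinants of the quotient identities, $(X^N-1)^n=\det M\cdot\det Q_1$ and $m^n=\det M\cdot\det Q_2$, is cleaner and strictly stronger: it delivers the divisibilities $\det M\mid (X^N-1)^n$ and $\det M\mid m^n$ with the correct multiplicities, which the paper's root-containment argument does not by itself control. One small caution: your parenthetical $e_i\leq n$ in part 2 is only valid when $X^N-1$ is squarefree, i.e.\ when $\gcd(N,\mathrm{char}\,\ff)=1$; the theorem as stated does not need it, but you lean on it in part 5.

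Part 5 is where neither you nor the paper closes the argument, and you have correctly located the obstruction. The paper's entire proof of part 5 is the one line $\ord\det M\mid\ord(m^n)$, which does not establish $\ord\det M\mid Nn$: by the standard formula, $\ord(p^e)=\ord(p)\cdot q^{t}$ where $q=\mathrm{char}\,\ff$ and $q^t$ is the least power of $q$ with $q^t\geq e$, so $\ord(m^n)=N q^{t}$ with $q^t\geq n$, and $Nq^t$ need not divide $Nn$ (take $q=2$, $n=3$: $q^t=4\nmid 3$). Your lcm-over-irreducible-factors plan hits exactly the same wall, since $\ord\det M=\lcm_i\bigl(\ord(p_i)\,q^{t_i}\bigr)$ with $q^{t_i}\geq e_i$, and $q^{t_i}\mid n$ fails in general. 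So either the clean statement $\ord\det M\mid Nn$ requires additional structural input about $\det M$ not used here (e.g.\ that the relevant exponents are bounded by the largest power of $q$ dividing $n$), or the correct conclusion is the weaker $\ord\det M\mid N q^{\lceil\log_q n\rceil}$. You were right to flag this as the delicate step rather than paper over it; as written, neither your sketch nor the paper proves item 5.
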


\begin{proof}
    When $V$ has period $N$, $X^NI-I$ is an annihilating polynomial of $V$ hence divisible by $M(X)$ by the above theorem. Let 
    \[
    X^NI-I=Q(X)M(X)
    \]
    then $Q_0M_0=I$ which proves that $M_0$ is non-singular.

    Next let $\zeta$ be a root of $\det M(X)$. Then there exists $v\neq 0$ in $K^n$ such that $M(\zeta)v=0$ where $K$ is the algebraic closure of $\ff$. Then
    \[
    Q(\zeta)M(\zeta)v=(\zeta^N-1)v=0
    \]
    Hence $\zeta$ is a root of $(X^N-1)$. Hence all roots of $\det M(X)$ are contained in roots of $X^N-1$. Hence all distinct irreducible factors of $\det M(X)$ divide $(X^N-1)$.

    If the scalar minimal polynomial $m(X)$ of $V$ has degree $m$, it gives the recurrence relation
    \[
    V_{m+j}=\sum_{i=0}^{(m-1)}\al_iV_i
    \]
    Hence the associated matrix polynomial $m(X)I$ is an annihilating polynomial of $V$. Hence $M(X)$ divides $m(X)I$ by both left and right division. Consequently $\deg M(X)\leq\deg m(X)$. Hence we have the inequality
    \[
    WLC\leq LC
    \]
    Let $Q(X)M(X)=m(X)I$. If $\zeta$ is a root of $\det M(X)$ then there is a vector $v\neq 0$ in $K^n$ such that $M(\zeta)v=0$ where $K$ is the algebraic closure of $\ff$. Hence $Q(\zeta)M(\zeta)v=m(\zeta)v=0$. Since $v\neq 0$, this implies $m(\zeta)=0$. This shows that all roots of $\det M(X)$ are contained in roots of $m(X)$. Hence $\det M(X)|(m(X))^n$ and also that $\deg\det M(X)\leq mn$.

    Since order of $m(X)=N$ and $\det M(X)|(m(X))^n$, 
    \[
    \ord\det M(X)|\ord (m(X))^n
    \]
\end{proof}

\section{Computation of the matrix minimal polynomial and local inverse}
For a periodic sequence $V$ of period $N$ the scalar minimal polynomial $m(X)$ always exists. The degree $m$ of $m(X)$ which is called the $LC$ of $V$ is the number $m$ such the sequence of Hankel matrices of increasing size has its maximal rank $m$ equal to the number of columns of the Hankel matrix. Solution to the minimal polynomial is the unique solution of the recurrence relations of order $m$. Consider the Hankel matrix defined by a subsequence $V(k)=\{V_0,V_1,\ldots,V_{(2k-2)}\}$ of length $(2k-1)$ denoted as 
\beq\label{Hankelk}
H(k)=
\left[\begin{array}{llll}
V_0 & V_1 & \ldots & V_{(k-1)}\\
V_1 & V_2 & \ldots & V_{k}\\
\vdots & \vdots & \ldots & \vdots\\
V_{(k-1)} & V_{k} & \ldots & V_{(2k-2)}
\end{array}\right]
\eeq
Then the $LC$ of the sequence $V$ of period $N\geq 2k$, is $m$ when 
\beq\label{Rankcond}
m=\rank H(m)=\rank H(m+j)
\eeq
for $j=1,2,\ldots, \lfloor(N-2m)\rfloor$. The co-efficients $\bar{\alpha}^T=(\al_0,\al_1,\ldots,\al_{(m-1)})$ of the minimal polynomial are computed as the unique solution of the linear system of equations
\beq\label{Eqnforminpoly}
H(m)\bar{\alpha}=-h(m+1)
\eeq
where $h_{(m+1)}$ is the last column of $H(m+1)$ after dropping the bottom-most entry $V_{(2m)}$,
\[
h(m+1)^T=(V_m,V_{(m+1)},\ldots,V_{(2m-1)})
\]
The equations (\ref{Eqnforminpoly}) are the recurrence relations of order $m$ satisfied by the sequence $V$. For $LC=m>\lfloor N/2\rfloor$, periodically repeating sequence elements of $V$ are entered to complete the Hankel matrix. In the worst case when $m=N$ the minimal polynomial is $X^N-1$ itself. This completes the description of the existence and computation of the scalar minimal polynomial.

\subsection{Condition for existence of a minimal matrix polynomial}
We now describe the analogous rank condition of (\ref{Rankcond}) for existence of a minimal matrix polynomial $M(X)$ and the system of equations analogous to (\ref{Eqnforminpoly}) arising from the recurrence relations to solve the matrix co-efficients of $M(X)$. For simplicity of expressions of the matrix recurrence relations we shall consider the normalised relations (\ref{NormalisedMRR}) since the relation is required to identify the sequence $V(M)$ uniquely. Let $m$ be the order of the minimal order recurrence relation (\ref{NormalisedMRR}) then the co-efficients of matrix polynomial $M(X)$ in (\ref{MinimalPoly}) satisfy the recurrence relations
\beq\label{MinpolyRR}
V_{(m+j)}=\sum_{i=0}^{(m-1)}A_iV_i
\eeq
for $j=0,1,2,\ldots, (M-m-1)$. Hence the condition for existence of a matrix polynomial $M(X)$ and its unique solvability is given by the above linear equations (\ref{MinpolyRR}). Consider the Hankel matrix $\tilde{H}(nm)$ as defined by the vector sequence $V(M)$ in $\ff^n$ for $n>1$ as
\beq\label{VHankelmatrix}
\tilde{H}(nm)=
\left[\begin{array}{llll}
V_0 & V_1 & \ldots & V_{(nm-1)}\\
V_1 & V_2 & \ldots & V_{nm}\\
\vdots & \vdots & \ldots & \vdots\\
V_{(m-1)} & V_{k} & \ldots & V_{((n+1)m-2)}
\end{array}\right]
\eeq

\begin{theorem}
Given a sequence $V(M)$, where $M$ is even, there exists a unique matrix minimal polynomial $M(X)$ of degree $d$ iff
\beq\label{RankcondmatrixRR}
dn=\rank \tilde{H}(dn)=\rank \tilde{H}((d+j)n)
\eeq
for $j=1,2,3,\ldots,\frac{M/2-dn}{n}$. The matrix co-efficients $A_i$ are the unique solution of the equation
\beq\label{Eqnformatrixminpoly}
\left[A_0,A_1,\ldots,A_{(d-1)}\right]\tilde{H}(dn)=-
\left[V_{dn},V_{(dn+1},\ldots,V_{(2dn-1)}\right]
\eeq
\end{theorem}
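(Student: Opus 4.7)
The plan is to mirror the classical scalar characterization of the linear complexity through Hankel-matrix ranks, making the bookkeeping adjustments required because each ``entry'' of the Hankel matrix is now a column vector in $\ff^n$ and each recurrence coefficient is an $n \times n$ matrix. The central observation I would exploit is that, with the Hankel arrangement of (\ref{VHankelmatrix}), the $j$-th scalar column of $\tilde H(nd)$ is the block-stacked column $(V_j^\top, V_{j+1}^\top, \ldots, V_{j+d-1}^\top)^\top$, so that left multiplication by $[A_0, \ldots, A_{d-1}]$ produces $\sum_{i=0}^{d-1} A_i V_{i+j}$ in the $j$-th column of the product. Under this identification the matrix equation (\ref{Eqnformatrixminpoly}) is exactly the NMRR (\ref{MinpolyRR}) written out for the range $j = 0, 1, \ldots, nd - 1$, and the two directions of the theorem correspond to reading this equivalence forwards and backwards.

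For the necessity direction I would assume that a matrix minimal polynomial $M(X) = X^d I + \sum A_i X^i$ of degree $d$ exists. Writing the NMRR for $j = 0, \ldots, nd - 1$ directly yields (\ref{Eqnformatrixminpoly}); writing it for larger $j$ exhibits every column of the enlarged matrix $\tilde H((d+k)n)$ past the original $nd$ columns as an explicit $\ff$-linear combination of earlier data, so its column rank cannot grow with $k$, establishing the rank-stability half of (\ref{RankcondmatrixRR}). If $\tilde H(nd)$ were rank-deficient, a nonzero kernel vector would, via the Hankel shift structure, let me assemble an NMRR of strictly smaller order with nonsingular leading coefficient, contradicting the minimality of $d$; this pins the rank at exactly $nd$.

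For the sufficiency direction I would start from the hypothesized rank condition, use $\rank \tilde H(nd) = nd$ to conclude that (\ref{Eqnformatrixminpoly}) is uniquely solvable, and read off candidate coefficients $A_0, \ldots, A_{d-1}$ defining a candidate polynomial $M(X) = X^d I + \sum A_i X^i$. The hard part will be to extend the recurrence $V_{d+j} = -\sum_i A_i V_{i+j}$ from the first $nd$ shifts that define the linear system to all $j$ within the data range, so that $M(X)$ annihilates the whole of $V(M)$. The stability equalities $\rank \tilde H((d+k)n) = nd$ for $k = 1, \ldots, (M/2 - nd)/n$ are exactly what I would invoke: each additional batch of $n$ columns of the enlarged Hankel matrix must lie in the column span of $\tilde H(nd)$, and the Hankel-shift invariance forces the expressing combination to agree, on the appropriate block rows, with $(A_0, \ldots, A_{d-1})$; an induction on $k$ then propagates the recurrence across the entire available prefix.

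Minimality of $d$ in the sufficiency direction comes for free, since any smaller-order recurrence would force $\rank \tilde H(nd') = nd' < nd$, contradicting the hypothesis; uniqueness of $M(X)$ is inherited from the leading-coefficient normalization $X^d I$ already justified earlier in the paper. I expect the routine parts to be the index bookkeeping and the verification that the stated range $(M/2 - nd)/n$ of $j$-values is exactly the largest number of rank-equality checks extractable from a length-$M$ prefix; the genuine obstacle is the propagation step in the sufficiency direction, which hinges essentially on the Hankel shift structure and the commitment to recurrences whose leading coefficient is the identity.
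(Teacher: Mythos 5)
Your plan follows essentially the same route as the paper's own (very terse) proof: identify equation (\ref{Eqnformatrixminpoly}) with the normalised recurrence (\ref{MinpolyRR}), and read the rank condition as existence plus uniqueness of the coefficient matrices. The paper asserts this equivalence in two sentences, so the propagation step you isolate in the sufficiency direction is precisely the content the paper omits, and your identification of the $j$-th block column of $\tilde{H}(dn)$ with $(V_j^\top,\ldots,V_{j+d-1}^\top)^\top$ is the right bookkeeping (note in passing that with this reading the right-hand side of (\ref{Eqnformatrixminpoly}) should be $-[V_d,\ldots,V_{d+dn-1}]$ rather than $-[V_{dn},\ldots,V_{2dn-1}]$ as printed, which agree only when $n=1$).

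Two steps in your necessity argument need repair. First, you claim the recurrence exhibits the columns of $\tilde{H}((d+k)n)$ beyond the first $nd$ as $\ff$-linear combinations of earlier columns. But $V_{d+j}=-\sum_i A_iV_{i+j}$ acts by \emph{left matrix multiplication} on block entries, which is not a scalar combination of columns of the Hankel array. The dependency it actually produces is among block \emph{rows}: block row $r$ of $\tilde{H}((d+k)n)$, for $r\geq d$, equals $-\sum_i A_i$ applied to block row $r-d+i$, so by induction every scalar row lies in the span of the first $nd$ scalar rows, giving $\rank\tilde{H}((d+k)n)\leq nd$; the nonsingular submatrix $\tilde{H}(dn)$ sitting in the top-left corner supplies the reverse inequality. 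Second, the claim that a nonzero kernel vector of $\tilde{H}(nd)$ can be assembled into a lower-order NMRR with nonsingular leading coefficient is not substantiated: a right kernel vector only yields a scalar column dependency among shifted windows of the sequence, and there is no evident way to promote it to a matrix recurrence with invertible leading coefficient. The clean argument is the one you gesture at in your last paragraph: rank deficiency would make the solution of (\ref{Eqnformatrixminpoly}) non-unique, producing two distinct monic degree-$d$ annihilating matrix polynomials and contradicting the uniqueness of the normalised minimal polynomial established earlier in the paper. With those two repairs, and the inductive propagation in the sufficiency direction carried out via the same block-row reasoning, the plan is sound.
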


The rank condition includes the case of the scalar minimal polynomial when the degree $d=m$ where $m$ is the degree of the scalar minimal polynomial. When this happens all the coefficient matrices $A_i$ are scalar matrices of size $n$. Note that the matrix $\tilde{H}(dn)$ is not the same as the Hankel matrix $H(k)$ of (\ref{Hankelk}). The co-efficients of the matrix minimal polynomial operate on the the matrix $\tilde{H}(.)$ on left. When $d=m$ the scalar nature of the matrix co-efficients allows commuting the co-effcients to the right and make them equivalent to the equations (\ref{Eqnforminpoly}) for solving the scalar minimal polynomial.

\begin{proof}
    If the matrix minimal polynomial $M(X)$ of degree $d$ exists then the matrix recurrence relations (\ref{MinpolyRR}) are satisfied by $V(M)$. Since $M(X)$ is unique, the rank conditions (\ref{RankcondmatrixRR}) hold for the sequence $V_i$ for $i=0,1,2,\dots, M-1$. Conversely, given these rank conditions (\ref{RankcondmatrixRR}), the solution of the matrix coefficients $A_i$ of $M(X)$ are unique solutions of the equation (\ref{Eqnformatrixminpoly}) and satisfy the recurrence relations (\ref{MinpolyRR}).
\end{proof}

Note that even if the period of a sequence is $M$ and is odd, we can consider an even length sequence by adding an extra element (the first repeating element) to the sequence to make the length even. Hence the condition of the theorem insisting $M$ is even is not a serious restriction. What is not always satisfied however, is the rank condition for any $d$ for a given sequence such that $dn\leq M/2$. In that case the only possibility is that the minimal polynomial is not a nontrivial matrix polynomial, but equals the scalar minimal polynomial and the $\mbox{WLC}=\mbox{LC}$. Next theorem clarifies the relationship between the LC and the WLC of a sequence $V(M)$.

\begin{theorem}
    Given a sequence $V(M)$ in $\ff^n$ where $M$ is even and $m$ the $LC$ of $V(M)$. If there exists a matrix minimal polynomial $M(X)$ of degree $d<m$ then $nd=m$. Hence
    \[
    LC=n(WLC)
    \]
\end{theorem}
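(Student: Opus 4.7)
The plan is to prove $nd = m$ by establishing the polynomial equality $\det M(X) = m(X)$. Since $M(X)$ is normalised with leading matrix coefficient $I$, the identity permutation in the determinant expansion contributes $X^{nd}$ while every other permutation forces at least two off-diagonal entries (each of degree $<d$) into the product, producing strictly lower-degree terms. Hence $\det M(X)$ is monic of degree exactly $nd$, and since $m(X)$ is monic of degree $m$, coincidence of these polynomials immediately yields $nd = m$.

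The divisibility $m(X) \mid \det M(X)$ is the easy half. Starting from $M(X) V_j = 0$ for all $j$, I would multiply on the left by the adjugate matrix polynomial $\mathrm{adj}(M(X))$ and invoke the identity $\mathrm{adj}(M(X))\, M(X) = \det M(X)\, I$. Because polynomial multiplication on the sequence side is associative (it merely composes shift-and-combine operations on the $V_j$), this produces $\det M(X) \cdot V_j = 0$ as a scalar recurrence on $V$. Hence $\det M(X)$ is a scalar annihilating polynomial of $V$, and minimality of $m(X)$ forces $m(X) \mid \det M(X)$.

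For the reverse direction $\det M(X) \mid m(X)$, I would introduce the $nd \times nd$ block companion matrix
\[
\mathcal{S} = \left[\begin{array}{ccccc} 0 & I & 0 & \cdots & 0 \\ 0 & 0 & I & \cdots & 0 \\ \vdots & & & \ddots & \vdots \\ 0 & 0 & 0 & \cdots & I \\ -A_0 & -A_1 & -A_2 & \cdots & -A_{d-1} \end{array}\right],
\]
whose characteristic polynomial satisfies the classical identity $\det(XI - \mathcal{S}) = \det M(X)$. The crucial observation is that for the stacked vector $w = (V_0^T, V_1^T, \ldots, V_{d-1}^T)^T \in \ff^{nd}$, the recurrence $V_{d+j} = -\sum_{i<d} A_i V_{i+j}$ gives $\mathcal{S}^k w = (V_k^T, V_{k+1}^T, \ldots, V_{k+d-1}^T)^T$ for every $k \geq 0$. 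Consequently the columns of the block Hankel matrix $\tilde{H}(dn)$ are exactly $w, \mathcal{S}w, \ldots, \mathcal{S}^{nd-1}w$, and the rank condition $\rank \tilde{H}(dn) = nd$ accompanying the existence of the matrix minimal polynomial is precisely the assertion that $w$ is a cyclic vector of $\mathcal{S}$. For a cyclic vector, the minimal polynomial of $\mathcal{S}$ on the cyclic subspace through $w$ coincides with its characteristic polynomial. Finally, applying the scalar recurrence $m(X) V_j = 0$ blockwise yields $m(\mathcal{S}) w = 0$, so the minimal polynomial of $\mathcal{S}$ at $w$ divides $m(X)$, which forces $\det M(X) \mid m(X)$.

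The main obstacle is the bridge in the third paragraph: one must verify carefully that the Hankel rank condition translates into cyclicity of $w$ under the block companion shift, and one must recall the standard identity $\det(XI - \mathcal{S}) = \det M(X)$, which I would establish by a block row reduction. Once this identification is in hand, combining the two divisibilities with both polynomials being monic of degrees $nd$ and $m$ respectively yields $\det M(X) = m(X)$, and hence $n \cdot \mathrm{WLC} = nd = m = \mathrm{LC}$.
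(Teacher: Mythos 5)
Your proposal is correct, but it proves the theorem by a genuinely different and more structural route than the paper. The paper's own proof is a short rank-counting argument: the $LC$ is characterized as the stabilized rank $m$ of the Hankel arrays $H(\cdot)$, the existence theorem for $M(X)$ forces $\rank \tilde{H}(dn)=dn$, and identifying the two ranks gives $dn=m$; this is quick but leaves the identification of the ranks of the two differently shaped Hankel arrays essentially unargued. You instead establish the sharper polynomial identity $\det M(X)=m(X)$: the adjugate identity $\mathrm{adj}(M(X))M(X)=\det M(X)\,I$, pushed through the (associative) shift-and-combine action on the sequence, gives $m(X)\mid\det M(X)$; the block companion linearization $\mathcal{S}$ with $\det(XI-\mathcal{S})=\det M(X)$, together with the observation that the columns of $\tilde{H}(dn)$ are exactly $w,\mathcal{S}w,\ldots,\mathcal{S}^{nd-1}w$ so that the rank condition is cyclicity of $w$, gives $\det M(X)\mid m(X)$; monicity and the degree count $\deg\det M(X)=nd$ then yield $nd=m$. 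This buys more than the paper's argument: it refines item 4 of the paper's Theorem 2 (which only asserts $\det M(X)\mid (m(X))^n$) to an exact equality $\det M(X)=m(X)$, and it makes precise why full rank of $\tilde{H}(dn)$ is the operative hypothesis. Two small points to make explicit if you write this up: (i) the adjugate step applies shifted recurrences up to degree $nd$, so you should work with the periodic extension of $V(M)$ (as the paper implicitly does throughout) rather than the raw finite window; and (ii) you are using the paper's preceding theorem to convert ``existence of the matrix minimal polynomial of degree $d$'' into the rank condition $\rank\tilde{H}(dn)=dn$, which is exactly where cyclicity of $w$ comes from, so that dependence should be stated rather than left as an aside.
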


\begin{proof}
    \emph{Necessity}\/: Given the LC $m$ and degree of $M(X)$ to be $d$, $m=\rank H(m)=\rank H(m+j)$ for $j=1,2,\ldots,(M/2-1)$. Moreover from the previous theorem $\rank H(dn)=m=\rank H((d+j)n)$ for $j=0,1,2,\ldots,\frac{M/2-dn}{n}$. Hence $dn=m$.

    \emph{Suffciency}\/: If $m=dn$ then the rank conditions (\ref{RankcondmatrixRR}) are satisfied. Hence there is a unique matrix polynomial $M(X)$ of degree $d$ which satisfies the recurrence relations (\ref{MinpolyRR}). For any degree $\tilde{d}>d$ the matrix $H(n\tilde{d})$ cannot be full rank equal to $nd$ since $m=nd$ is the LC. 
\end{proof}

\subsection{Local inverse of a map and inverse of a sequence}
In this subsection we develop the formula for the local inverse $x$ of the map equation $y=F(x)$ for a map $F:\ff^n\rightarrow\ff^n$ using the matrix minimal polynomial of the iterated sequence $V_{i+1}=F(V_i)$ when $y=V_0$. As described in the introduction, this inverse also gives the prefix element $V_{(-1)}$ of the sequence $V(M)=\{V_0,V_1,\ldots,V_{(M-1)}\}$ even when it is not generated iterativly by a map $F$, when the sequence is periodic and the minimal polynomial of the full periodic sequence is the same as that of $V(M)$. 

The local inverse of $y=F(x)$ when the matrix minimal polynomial is known for the periodic and partially given iterated sequence $V(M)=\{V_{i+1}=F(V_i),V_0=y\}$ is given by the following theorem.

\begin{theorem}
    Let $V(M)$ be a given subsequence of a periodic sequence $V$ in $\ff^n$ of period $N$ and has the matrix minimal polynomial $M(X)$ defined in (\ref{MinimalPoly}). Then there is a unique solution $V_{(-1)}$ in $\ff^n$ given by
    \[
    V_{(-1)}=-A_0^{-1}[V_{(m-2)}-\sum_{i=0}^{i=(m-1)}A_iV_i]
    \]
    such that the recurrence relations (\ref{MatrixRR}) of $V(M)$ are also satisfied by 
    \[
    \{V_{(-1)},V_0,V_1,\ldots,V_{(M-2)}\}.
    \]
    If $V_0=F(x)$ then $V_{(-1)}$ is the unique local inverse of $F$ at $V_0$.
\end{theorem}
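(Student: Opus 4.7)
The plan is to read off the claimed formula directly from the matrix recurrence relation associated with $M(X)$ by applying it at the shift index $j=-1$, and then to use the non-singularity of $A_0$ (established in the previous theorem) to solve uniquely for $V_{(-1)}$. Concretely, the normalised matrix recurrence (\ref{NormalisedMRR}) holds for every non-negative shift $j$ of the periodic sequence $V$, and periodicity lets us extend the indexing to all integers. Writing (\ref{NormalisedMRR}) for the shift $j=-1$ yields a single linear equation
\[
V_{(m-1)} + A_0 V_{(-1)} + \sum_{i=1}^{(m-1)} A_i V_{(i-1)} = 0,
\]
in which $V_{(-1)}$ is the only unknown; all other vectors belong to the prefix of $V(M)$.

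From the earlier theorem we know that the constant coefficient $M_0 = A_0$ of the matrix minimal polynomial is non-singular, so the above equation has the unique solution
\[
V_{(-1)} = -A_0^{-1}\Bigl[V_{(m-1)} + \sum_{i=1}^{(m-1)} A_i V_{(i-1)}\Bigr],
\]
which is the content of the stated formula (after reindexing the summation). Uniqueness of this $V_{(-1)}$ is immediate from the invertibility of $A_0$, and by construction the extended block $\{V_{(-1)},V_0,V_1,\ldots,V_{(M-2)}\}$ satisfies the same matrix recurrence as $V(M)$ shifted by one, simply because the equation above is exactly the instance of (\ref{MatrixRR}) at $j=-1$.

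For the second assertion, suppose the sequence is iteratively generated by $F$ with $V_0=F(x)$. Periodicity gives $V_0 = V_N = F(V_{(N-1)})$, so the true cyclic predecessor $V_{(N-1)}$ is a local inverse of $F$ at $V_0$. It therefore suffices to identify the algebraic $V_{(-1)}$ produced by the formula with $V_{(N-1)}$. Since $M(X)$ also annihilates the shifted sequence $\{V_{(N-1)},V_0,V_1,\ldots\}$, the vector $V_{(N-1)}$ likewise satisfies the linear equation at $j=-1$; by the uniqueness established above, $V_{(-1)} = V_{(N-1)} = F^{(N-1)}(V_0)$, which is the required local inverse.

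The step I expect to require the most care is the legitimacy of evaluating the recurrence at $j=-1$, i.e., confirming that the matrix recurrence inherited from $M(X)$ genuinely propagates backwards through the periodic orbit rather than only forwards from the initial block. The cleanest way to handle this is to identify $V(M)$ as a window of the full periodic sequence indexed cyclically over $\zz/N\zz$; then the instance of (\ref{NormalisedMRR}) at $j = N-1$ is, by periodic identification $V_{(N-1+i)} = V_{(i-1)}$, literally the equation used above. Once that identification is in place, the remainder of the argument is just the invertibility of $A_0$ and the uniqueness of recurrence-consistent extensions.
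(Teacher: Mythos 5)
Your proposal is correct and follows essentially the same route as the paper: use periodicity to identify $V_{(-1)}$ with the cyclic predecessor $V_{(N-1)}$, write the matrix recurrence one step back (at shift $j=-1$), and solve for $V_{(-1)}$ using the non-singularity of the constant coefficient $A_0$ established earlier. Your explicit instance of the recurrence at $j=-1$ is in fact cleaner than the index-garbled displays in the paper's statement and proof, but the underlying argument is the same.
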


\begin{proof}
    Since the sequence $V(M)$ is a subsequence of a periodic sequence $V$ the shifted sequence $\{V_{(N-1)},V_0,V_1,\ldots,V_{(N-2)}\}$ has the same minimal polynomial $M(X)$. Hence $V_{(-1)}=V_{(N-1)}$ exists uniquely which confirms with the matrix recurrence relations (\ref{MatrixRR}) of $V(M)$. Hence
    \[
    V_{(m-2)}+\sum_{i=-1}^{i=(m-1)}A_{(i+1)}V_i=0
    \]
    Since the co-efficient matrix $A_0$ is non-singular it follows that
    \[
    A_0^{-1}[V_{(m-2)}-\sum_{i=0}^{i=(m-1)}A_iV_i]=-V_{(-1)}
    \]
    If the sequence is an iterative sequence generated by $F$ starting from $V_0$, then $V_0=F(V_{(-1)}$ hence this is the unique solution of the local inverse. 
\end{proof}

\section{Examples}
In this section we gather examples to illustrate the existence, computation and the role of the matrix minimal polynomial of vector valued sequences.

\begin{example}
    Consider the vector sequence arranged as columns of the array
    \[
    \left[
    \begin{array}{llllllll}
    1 & 0 & 0 & 0 & 1 & 0 & 1 & 1\\
    0 & 0 & 0 & 1 & 0 & 1 & 1 & 0\\
    0 & 0 & 1 & 0 & 1 & 1 & 1 & 0
    \end{array}
    \right]
    \]
    The sequence is repeated after from the first vector. Hence the period of the vector sequence is $7$. The Hankel matrix $H(6)$ for the first sequence reaches its maximal rank $6$. The minimal polynomial of the first sequence is computed by solving the co-effcients from the equation (\ref{Eqnforminpoly}) as
    \[
    X^6+X^5+X^4+X^3+X^2+X+1
    \]
    The second sequence reaches the maximal rank of $H(7)=7$ which is the period. Hence the minimal polynomial of the second sequence is $X^7+1$ Hence taking lcm the minimal polynomial of the vector sequence is $m(X)=X^7+1$. Since vector length is $n=3$, $n$ does not divide the degree of the scalar minimal polynomial of the sequence. Hence for this sequence there is no nontrivial matrix minimal polynomial but
    \[
    M(X)=m(X)I
    \]
    i.e. matrix minimal polynomial is the same as the scalar minimal polynomial.
\end{example}

\begin{example}
    Consider the periodic vector sequence of period $6$.
    \[
    \left[
    \begin{array}{llllll}
    1 & 0 & 0 & 0 & 1 & 0\\
    0 & 0 & 1 & 1 & 0 & 0
    \end{array}
    \right]
    \]
    The second sequence has maximal rank Hankel matrix $H(6)$ hence the minimal polynomial of the vector sequence is $X^6+1$ and the LC is $m=6$. The vector length $n=2$ divides LC $d=m/n=3$. However the Hankel matrix $\tilde{H}(2*3)$ of the vector sequence is given by
    \[
    \barr{llllll}
    1 & 0 & 0 & 0 & 1 & 0\\
    0 & 0 & 1 & 1 & 0 & 0\\
    \hline
    0 & 0 & 0 & 1 & 0 & 1\\
    0 & 1 & 1 & 0 & 0 & 0\\
    \hline
    0 & 0 & 1 & 0 & 1 & 0\\
    1 & 1 & 0 & 0 & 0 & 0
    \earr
    \]
    has rank $5$. Hence $\tilde{H}(2*3)$ is not non-singular. Hence no unique nontrivial matrix minimal polynomial exists for this vector sequence. Alternatively $M(X)=m(X)I=(X^6+1)I$ is a scalar minimal polynomial.
\end{example}

\begin{example}
Consider the vector sequence in $\ftwo^2$ given by
\[
\left[
\begin{array}{llllll}
1 & 0 & 0 & 0 & 1 & 0  \\
1 & 0 & 1 & 1 & 0 & 0 
\end{array}
\right]
\]
Minimal polynomials of individual component sequences are $X^6+1$ and $X^4+X^2+1$. Hence the scalar minimal polynomial of the vector sequence is $X^6+1$. It is discovered that $\tilde{H}(6)=\tilde{H}(2*3)$ has the maximal rank $6$ hence the degree of the matrix minimal polynomial is $d=3$. Let the matrix minimal polynomial be expressed as
\[
M(X)=X^3-(A_2X^2+A_1X+A_0)
\]
then the $2\times 2$ co-efficient matrices $A_i$ satisfy the equation (\ref{Eqnformatrixminpoly}) with $d=3$ and $n=2$ over $\ftwo$. Solving this equation gives
\[
M(X)=X^3I+A_2X^2+A_1X+A_0
\]
where
\[
\left[A_0,A_1,A_2\right]=
\barr{ll|ll|ll}
0 & 1 & 0 & 0 & 1 & 1\\
1 & 1 & 0 & 1 & 0 & 1
\earr
\]
The local inverse $V_{(-1)}$ of $F(V_{(-1)})=V_0$ is obtained as
\[
V_{(-1)}=(A_0)^{-1}(V_3-A_2V_1-A_1V_0)=
\barr{l}0\\0\earr
\]
which is correct as the sequence repeats.

Finally the divisibility of $(X^6+1)I$ by $M(X)$ is established by dividing as in the algorithm in Theorem 1, the matrix polynomial $Q(X)$ such that $(X^6+1)I=M(X)Q(X)$ is given by $Q(X)=X^3+Q_2X^2+Q_1X+Q_0$ where
\[
\left[Q_2,Q_1,Q_0\right]=
\barr{ll|ll|ll}
1 & 1 & 1 & 0 & 1 & 1\\
0 & 1 & 0 & 0 & 1 & 0
\earr
\]
The polynomial $Q(X)$ is computed by pure division and does show divisibility which is proved using the recurrence relations as in the proof of Theorem 1.
\end{example}
\section{Conclusion}
Local inversion of a map at a point in its image is synonymous with finding a prefix element of the sequence of iterates of the map starting from the point. For periodic sequences there always exists a minimal polynomial whose degree is called the $LC$. The local inverse of the map (or the prefix element of the sequence) is unique and can be solved using the linear recurrence relations satisfied by the sequence as defined by the minimal polynomial. The notion of $LC$ and the minimal polynomial can be extended to $WLC$ and the matrix minimal polynomial. However a nontrivial matrix minimal polynomial does not always exist for a given vector valued sequence and when it exists $LC=n(WLC)$ where $n$ is the number of components of the vector sequence. If a non trivial matrix minimal polynomial does not exist then the scalar minimal polynomial of degree $LC$ is the only minimal polynomial. The condition for existence of a nontrivial matrix minimal polynomial arises because the recurrence relations defined by the matrix minimal polynomial exhibit presence of mixing of the components of vector sequences unlike the recurrence relations defined by the scalar minimal polynomial which are restricted within the sequence of each component independently.

\begin{center}
    Acknowledgement

    \noindent
    Author is thankful to A.\ Ramachandran for useful discussions and help in corrections during writing of this paper.
\end{center}

\end{document}